\title{Locked and Unlocked Smooth Embeddings of Surfaces}
\author{David Eppstein\thanks{Department of Computer Science, University of California, Irvine. This work was inspired by discussions at the 3rd Virtual Workshop on Computational Geometry, held in March 2022, for which we thank the organizers and participants. Research supported in part by NSF grant CCF-2212129.}}
\date{ }
\begin{document}
\thispagestyle{empty}
\maketitle  

\begin{abstract}
We study the continuous motion of smooth isometric embeddings of a planar surface in three-dimensional Euclidean space, and two related discrete analogues of these embeddings, polygonal embeddings and flat foldings without interior vertices, under continuous changes of the embedding or folding. We show that every star-shaped or spiral-shaped domain is unlocked: a continuous motion unfolds it to a flat embedding. However, disks with two holes can have locked embeddings that are topologically equivalent to a flat embedding but cannot reach a flat embedding by continuous motion.
\end{abstract}

\section{Introduction}

Much research in computational geometry has concerned smooth deformations of a shape that preserve its geometric structure. This includes bloomings, continuous and collision-free unfoldings from polyhedra to flat nets that preserve the shape of each face~\cite{SonAma-TRA-04,BieLubSun-CG-05,MilPak-DCG-08,DemDemHar-GC-11,XiLie-IROS-15,HaoKimLie-SCF-18}, the carpenter's rule problem on continuous collision-free motions that straighten a polygonal chain while preserving segment lengths~\cite{Str-FOCS-00,ConDemRot-DCG-03,Par-TAMS-09}, and the closely related study of continuous rigid motions of single-vertex origami patterns~\cite{StrWhi-JCDCG-05,PanStr-CG-10,AbeCanDem-JoCG-16}. When the carpenter's rule problem is generalized to to more complex linkages or three dimensions it can have \emph{locked} configurations, unable to reach a straightened configuration even though there is no topological obstacle to their reconfiguration~\cite{HopJosWhi-SICOMP-84,BieDemDem-DCG-01,BieDemDem-DAM-02,DemLanOro-CG-03,BalChaDem-WADS-09,ConDemDem-DCG-10}.
Demaine, Devadoss, Mitchell, and O'Rourke studied ``folded states'' of simple planar polygons in 3d, consisting of a surface-distance-preserving mapping to 3D together with a consistent ``local stacking order'' at parts of the polygon that are mapped onto each other. As they show, any folded state can be continuously transformed to any other folded state: the configuration space of these states is connected~\cite{DemMit-CCCG-01,DemDevMit-CCCG-04}.

In this work we examine reconfigurability for three natural restricted forms of folded states:
\begin{itemize}
\item Smooth embeddings into $\mathbb{R}^3$, where the embedded surface is doubly differentiable (having a tangent plane everywhere) without self-contact.
\item Polygonal (piecewise linear) embeddings into $\mathbb{R}^3$ without interior vertices, so all ``fold lines'' where the mapping is not locally linear extend entirely across the surface. There should be finitely many connected linear pieces and no self-contact.
\item Planar folded states (flat foldings) without interior vertices. We again require that the mapping be piecewise linear with finitely many pieces and that the fold lines extend entirely across the surface.
\end{itemize}
At each interior point of a smoothly embedded flat surface in $\mathbb{R}^3$ that is not locally flat, the surface bends along a straight ``bend line'' that continues to the surface's boundary~\cite{FucTab-MO-07}. As an everyday example of this phenomenon, when holding a slice of pizza at its crust, keeping the crust flat allows the slice to droop, but bending it extends rigid bend lines lengthwise through the slice, preventing drooping~\cite{TurGooSen-PIMEC-15}. Our restriction against interior vertices of polygonal embeddings and flat foldings provides a combinatorial model of the same phenomenon. We have studied  flat foldings with this restriction (but not their reconfiguration) in previous work~\cite{Epp-JoCG-19}.


In all three models of bending and folding we allow continuous motions that stay in the same model; in particular, in the polygonal embedding model, folds may ``roll''  along the surface rather than remaining fixed in place.
Our folded states are special cases of the ones previously considered by Demaine et al.~\cite{DemMit-CCCG-01,DemDevMit-CCCG-04}, and we retain their notion of a continuous motion as a mapping from the closed unit interval $[0,1]$ to folded states that is continuous under the sup-norm of the distances of mapped points and (for flat foldings) consistent with respect to the local stacking order. The initial configuration of a motion is the mapping for the parameter value $0$, and the final configuration is the mapping for the parameter value $1$. 
For all three of our restricted models of folded states, we prove the following new results:
\begin{itemize}
\item A compact subset of the plane with a continuous shrinking motion has a connected space of restricted folded states: every folded state can be continuously unfolded to a flat state. These sets are topological disks and include the star-shaped domains.
\item There exist subsets of the plane, topologically equivalent to a disk with two holes, that can be locked: they have embeddings that are topologically equivalent (ambient isotopic) to their flat embedding, but cannot be continuously deformed to become flat.
\end{itemize}

\section{Shapes that can shrink into themselves}

\begin{figure}[t]
\centering\includegraphics[width=0.7\columnwidth]{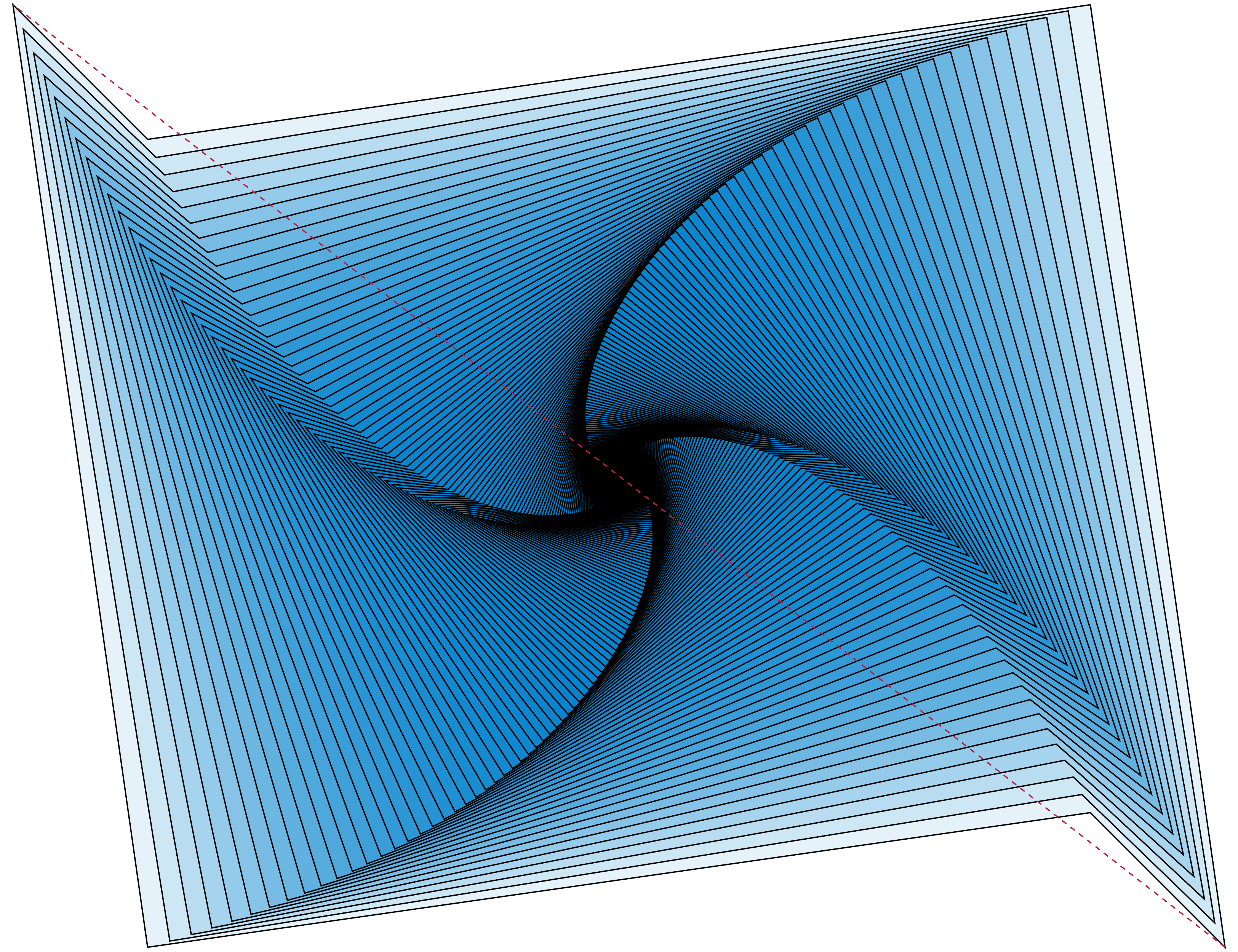}
\caption{A polygon that has a continuous shrinking motion into itself, but that is not star-shaped.}
\label{fig:generalized-star}
\end{figure}

A \emph{star-shaped} polygon, or more generally a star-shaped domain, is a subset $S$ of the plane such that, with an appropriate choice of one point of the plane to be the origin, every scaled copy $pS$ for $p\in[0,1]$ is a subset of $S$ itself. These are widely studied in computational geometry, and can be recognized in linear time~\cite{LeePre-JACM-79}. However, these are not the only shapes that have a continuous shrinking motion of scaled copies of the shape into themselves. \cref{fig:generalized-star} depicts a different type of continuous shrinking motion, in which the shape spirals inwards while shrinking. Such a motion can be described by coordinatizing the plane by complex numbers, again for an appropriately chosen origin (the center of the spiral motion), choosing a complex number $q$ of absolute value less than one, and considering the family of scaled copies $e^pqS$ for $p\in[0,\infty)$. The linear shrinking motion of star-shaped domains is a special case of this type of motion in which $q$ is a positive real number. If any shape $S$ has any continuous shrinking motion of its scaled copies into itself, the start of the motion can be approximated to first order by an inward-spiraling shrinking motion of this type, which can then be continued to a complete inward-spiraling shrinking motion of the same shape. In this sense, this family of continuous shrinking motions is completely general, in the sense that all shapes with a continuous shrinking motion have a continuous shrinking motion of this type, although we will not use this fact. Following Aharonov et al.~\cite{AhaEliSho-JMAA-03}, we call a compact set $S$ that has a continuous shrinking motion of this type a \emph{spiral-shaped domain}.

In an inward-spiraling shrinking motion, each point of the set $S$ traces out a logarithmic spiral, which meets every ray from the center of the motion in a fixed angle $\theta$. The existence of a spiraling motion for a given simple polygon and a fixed choice of $\theta$ can be tested by intersecting polygonal regions, one for each edge, that describe the set of locations for the center where a spiral of this angle would leave the edge towards the interior of the polygon, rather than towards the exterior. This characterization leads to a polynomial-time algorithm for testing the existence of an inward-spiraling motion, by continuously varying $\theta$ over the range $(-\pi,\pi)$ and determining the combinatorial changes in the corresponding intersection of polygonal regions. It is plausible that finding a feasible angle $\theta$ and a center point for that choice of $\theta$ is an LP-type problem of bounded dimension, solvable in linear time, but we have not proved this.

The main results of this section are that every spiral-shaped domain $S$ has connected spaces of smooth embeddings, polygonal embeddings without interior vertices, and flat-foldings without interior vertices. Equivalently, every embedding can be continuously unfolded. The simplest case concerns smooth embeddings.

\begin{theorem}
Every smooth embedding of a spiral-shaped domain has a continuous motion, through smooth embeddings, to a flat embedding.
\end{theorem}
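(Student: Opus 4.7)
The plan is to construct an explicit one-parameter family of smooth isometric embeddings $F_r\colon S\to\mathbb R^3$, for $r\in[0,\infty]$, with $F_0=f$, $F_\infty$ a flat embedding, and $r\mapsto F_r$ continuous, then reparameterize to $t\in[0,1]$. Everything is built directly from the spiral contraction $\phi_r(z)=e^{rq}z$ witnessing the spiral shape of~$S$.

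\textbf{Setup and formula.} By the spiral-shaped hypothesis there exists $q\in\mathbb C$ with $\Re(q)<0$ such that, after placing the spiral center at $0\in\mathbb C$, the map $\phi_r$ carries $S$ into itself for every $r\ge 0$. Write $\tau_r=|e^{rq}|=e^{r\Re(q)}\in(0,1]$, so $\tau_r\to 0$. The center~$0$ lies in $S$ because the closed sets $\phi_r(S)\subseteq S$ shrink to $\{0\}$. After a preliminary rigid motion of $\mathbb R^3$ I may assume $f(0)=0$ and that the tangent plane $T_0 f(S)$ is the $xy$-plane, so that $df_0$ is a linear isometry from $\mathbb R^2$ onto the $xy$-plane of~$\mathbb R^3$. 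Let $R_r$ be the rotation of $\mathbb R^3$ about the $z$-axis by $-r\,\Im(q)$ if $df_0$ is orientation-preserving and by $+r\,\Im(q)$ otherwise; either way $R_r$ is chosen to cancel the rotation that $\phi_r$ induces on the tangent plane at the origin. Define
\[
F_r(x)\;=\;\tau_r^{-1}\,R_r\,f\!\bigl(e^{rq}x\bigr),\qquad x\in S.
\]
This is well defined since $e^{rq}x\in\phi_r(S)\subseteq S$.

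\textbf{Verification.} First, each $F_r$ is a smooth isometric embedding: its differential factors as multiplication by $e^{rq}$ (scaling tangent vectors by $\tau_r$), $df$ (a Euclidean isometry), $R_r$ (a rotation), and $\tau_r^{-1}$, so the two scalings cancel; injectivity follows from that of $\phi_r$ and~$f$. Second, $F_0=f$ and $r\mapsto F_r$ is continuous by smoothness of each factor. Third, for the limit I Taylor-expand $f$ at~$0$: $f(e^{rq}x)=df_0(e^{rq}x)+O(\tau_r^2)$ uniformly in $x\in S$. Dividing by $\tau_r$ and using linearity of $df_0$ gives $\tau_r^{-1}f(e^{rq}x)=df_0(e^{ir\Im(q)}x)+O(\tau_r)$, and because $df_0$ lands in the $xy$-plane, applying $R_r$ exactly removes the residual rotation $e^{ir\Im(q)}$. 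Hence $F_r(x)\to df_0(x)$ uniformly in $x$, and the limit map $df_0$, viewed as a linear isometry $S\subset\mathbb R^2\to xy\text{-plane}\subset\mathbb R^3$, is a flat embedding. Setting $t=1-e^{-r}$ and $F_1=df_0$ produces the required continuous motion on $[0,1]$.

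\textbf{Main obstacle.} The step that really uses the freedom in the construction is the $r\to\infty$ analysis: the counter-rotation $R_r$ has to cancel, exactly and uniformly, the rotational component of the spiral as seen after the linearization of~$f$ at the origin. Both the centering $f(0)=0$ (so that the blow-up factor $\tau_r^{-1}$ does not propel a nonzero translation to infinity) and the alignment of $T_0 f(S)$ with the plane around which $R_r$ rotates are essential for this cancellation to produce a single flat limit rather than a drifting or spinning family.
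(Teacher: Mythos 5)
Your proposal is correct and follows essentially the same route as the paper's proof: compose the spiral contraction with $f$, rescale about the image of the spiral center, and counter-rotate to cancel the induced spin, then pass to the limit. The only minor difference is that you justify the flat limit by a direct first-order Taylor expansion of $f$ at the center, whereas the paper argues via vanishing curvature and the exponential map; both give the required uniform convergence to a flat embedding.
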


\begin{proof}
Let $S$ be the given spiral-shaped domain, and $f:S\to\mathbb{R}^3$ be its smooth embedding.
Parameterize an inward-spiraling shrinking motion of $S$ as $s_i:S\to S$ where $i\in(0,1]$,  $s_1$ is the identity, and each $s_i$ scales $S$ by a factor of $i$, converging as $i\to 0$ to a single central point of the motion (which may or may not be on the boundary of~$S$).

Our proof converts this parameterized family of scalings to a parameterized family of smooth embeddings of $S$ at a single scale, by composing $s_i$, $f$, and a function that expands $\mathbb{R}^3$ by a factor of $1/i$ to restore the original size of $S$. The obvious expansion function $\mathbb{R}^3$ by $(x,y,z)\mapsto (x/i,y/i,z/i)$ does not work, because of the following issues:
\begin{itemize}
\item Composing $s_i$, $f$, and an expansion by $1/i$ does provide a continuous motion of smooth embeddings on the half-open interval $(0,1]$, whose curvature tends towards zero as the parameter goes to zero. However, we need a continuous motion on the closed interval $[0,1]$ for which the limiting embedding at parameter value zero exists and is completely flat. 

\item When the origin of $\mathbb{R}^3$ does not belong to all of the rescaled and smoothly embedded copies of $S$, the
composition with $\mathbb{R}^3$ by $(x,y,z)\mapsto (x/i,y/i,z/i)$, as $i\to 0$, will produce smooth embeddings of $S$ whose
distance from the origin is inversely proportional to $i$, preventing them from having a limit. We can prevent this by choosing coordinates for $\mathbb{R}^3$ that have as their origin $f(p)$, where $p$ is the limit point of the inward-spiraling shrinking motion on $S$. In this way, the composition of $s_i$, $f$, and an expansion by $1/i$ will act as the identity on this point, and more strongly will preserve the tangent plane of the surface at that point.

\begin{figure}[t]
\centering\includegraphics[width=0.8\columnwidth]{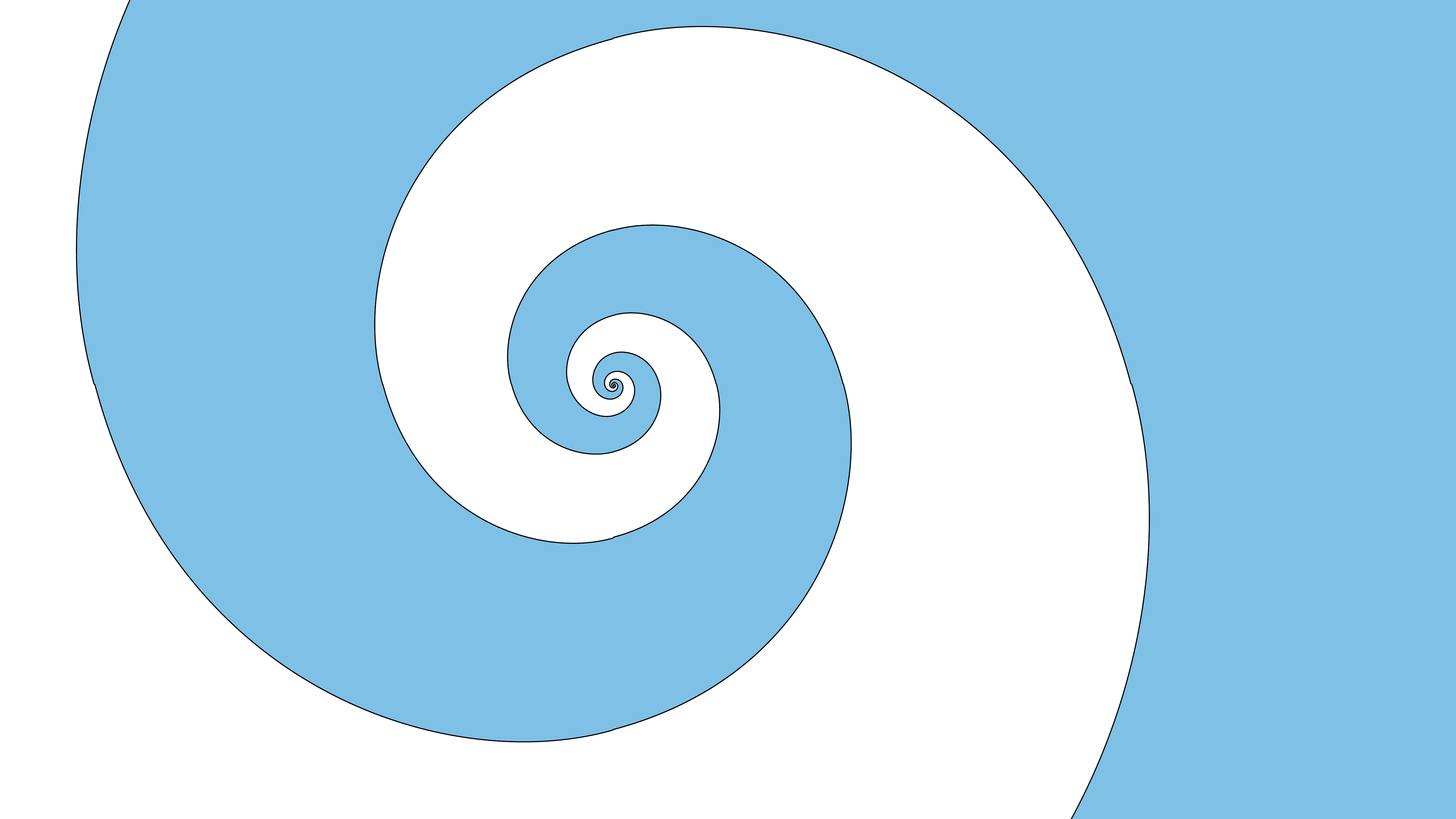}
\caption{A spiral-shaped domain between two logarithmic spirals, whose shrinking motion cannot be shortcut to a linear shrinking motion.}
\label{fig:spiral-domain}
\end{figure}

\item This still does not complete the proof, because a spiral inward-shrinking motion of the domain, composed with $f$ and $(x,y,z)\mapsto (x/i,y/i,z/i)$, will cause the embeddings to rotate at increasing speed as $i\to 0$, preventing the continuous motion from having a flat limiting surface at $i=0$.
For the polygonal domain of \cref{fig:generalized-star}, this problem can be circumvented by switching from the spiral inward-shrinking motion to a linear scaling transformation in $S$, once the scaled copies of $S$ become small enough that this linear scaling stays entirely within $S$. However, for some other shapes, such as the domain between two logarithmic spirals depicted in \cref{fig:spiral-domain}, switching to linear scaling is never possible.

Instead, we address this third issue by choosing a reference vector tangent to $S$ at $f(p)$. We compose $s_i$, $f$, an expansion by $1/i$, and a rotation of $\mathbb{R}^3$ (with axis perpendicular to the tangent plane at $f(p)$)  that restores this vector to its original direction.
\end{itemize}
Let $f_i$ denote the resulting composition of $s_i$, $f$, an expansion of $\mathbb{R}^3$ with a careful choice of origin, and a rotation that restores the original directions of vectors in the tangent plane to the embedded surface.
Then $f_i$, for values of $i$ in the half-open interval $(0,1]$, describes a continuous motion with $f_1=f$ as one endpoint of the motion. The maximum curvature of the surface $f_i(S)$  equals $i$ times the maximum curvature within the $i$-scaled copy of $S$ within $f(S)$, which in the limit becomes arbitrarily close to $i$ times the curvature at $p$ in $f(S)$ and therefore has limiting value zero. The transformations $f_i$ preserve the tangent plane to the surface and directions within the tangent plane.

For each point $q\in S$, $f_i(q)$ can be obtained by the exponential map: follow a curve on $f_i(S)$ of length $|p-q|$, starting from $p$, in the direction given by the image of the tangent vector $q-p$. This length and direction are invariant through the motion, and as $i\to 0$ the curvature of this path approaches zero. Therefore, $f_i$ converges pointwise to a flat embedding $f_0(S)$, obtained by the exponential map on the tangent plane of $f(S)$ at $p$. Appending $f_0$ to our continuous sequence of smooth embeddings $f_i$ for $i\in(0,1]$ gives us a continuous sequence on $i\in [0,1]$, flat at $i=0$ and equal to our starting embedding at $i=1$, which therefore shows that these two embeddings are reconfigurable to each other.
\end{proof}

This proof uses compactness to ensure that the limit point of the spiral shrinking motion is asymptotically flat and that sufficiently small copies of the entire domain fit into any neighborhood of that point. Non-compact surfaces can have self-similar embeddings, smooth everywhere except the limit point, that are invariant under shrinking and re-expansion. For polygonal embeddings we handle the same issue of avoiding self-similar embeddings differently, using the requirement that these embeddings have finitely many connected linear pieces.

\begin{theorem}
\label{thm:unfold-polygonal}
Every polygonal embedding without interior vertices of a bounded spiral-shaped domain has a continuous motion, through polygonal embeddings without interior vertices, to a flat embedding.
\end{theorem}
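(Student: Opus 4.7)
The plan is to reuse the construction from the preceding theorem: let $c$ be the center of the spiral shrinking motion $s_i$, and set $f_i = g_i \circ f \circ s_i$, where $g_i$ expands $\mathbb{R}^3$ by $1/i$ about $f(c)$ and rotates to restore a chosen reference tangent direction at $f(c)$. I would first verify that $f_i$ is a polygonal embedding of $S$ without interior vertices for each $i\in(0,1]$: its fold locus is the $s_i$-pullback of the chords $L_j\cap s_i(S)$ cut from the fold lines $L_1,\dots,L_k$ of $f$, and because the originals meet only on $\partial S$ (the no-interior-vertex hypothesis), so do their restrictions to $s_i(S)$ and their pullbacks. Continuity of $i\mapsto f_i$ on $(0,1]$ follows because the nested family $s_i(S)$ varies continuously as $i$ does, so each chord $L_j\cap s_i(S)$ either varies continuously in length and position, or passes through the empty set at the value of $i$ where $L_j$ grazes $\partial s_i(S)$; in the rescaled picture this transition is exactly a fold rolling off the boundary of $S$, the continuity permitted in the polygonal model.

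To reach a flat limit I cannot imitate the vanishing-curvature argument from the smooth case, since rescaling does not flatten a dihedral angle. Instead I would invoke the finite-fold-line hypothesis: each $L_j$ lies at distance $d_j\ge 0$ from $c$, and $s_i(S)$ fits inside a disk of radius $i\operatorname{diam}(S)$ about $c$, so below some threshold $i_\star$ only fold lines through $c$ survive in $s_i(S)$. The no-interior-vertex hypothesis forces at most one such surviving fold line when $c$ is an interior point of $S$, and at most a fan of folds emanating from a boundary vertex when $c\in\partial S$. In the cleanest subcase, with $c$ in the interior of a flat face of $f$, the scaled copy $s_i(S)$ for $i<i_\star$ lies entirely inside that face, $f$ acts affinely on it, and the rotations built into $s_i$ and $g_i$ cancel exactly, so $f_i$ reduces to the fixed affine extension of $f$ at $c$, which is flat; the motion then extends continuously to $i=0$ and the theorem is proved.

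The main obstacle is the case where $c$ lies on a fold line or at a boundary vertex of the embedding, since the above cancellation breaks down and $f_i$ does not approach a flat limit on its own. When $c$ is interior and on a single fold line $L$, I would finish the motion at $i=i_\star$ by a separate rigid rotation of one component of $S\setminus L$ about the image of $L$, continuously reducing the surviving dihedral angle to $\pi$; the absence of self-contact during this rotation follows from the contact-freeness of the starting embedding and the fact that only one fold line then remains. The boundary-vertex case is the most delicate: the surviving fan of folds at $c$ must be unfolded by a sequence of rigid motions avoiding self-contact, which I expect to require careful ordering of the unfolds. A useful auxiliary observation is that when the set of admissible spiral centers of $S$ contains an interior point off every fold line of $f$, a small perturbation of $c$ reduces to the cleanest subcase above; when no such perturbation exists, as can happen for the domain of \cref{fig:spiral-domain}, the finite-fan flattening at $c$ must be carried out directly.
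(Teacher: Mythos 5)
Your setup and your first two cases track the paper's proof closely: the same composition of the spiral shrinking, $f$, a recentered expansion, and a compensating rotation; the observation that only fold lines through the center survive in the limit; the direct flat limit when the center lies in the interior of a face; and, when the center lies on a single fold line, finishing by opening that one remaining dihedral angle to $\pi$ (which cannot self-intersect because only one fold remains). All of that is sound and is essentially the paper's argument.

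The genuine gap is your third case, where the center $p$ is a boundary point lying on several fold lines. You propose to unfold the surviving fan of folds ``by a sequence of rigid motions avoiding self-contact,'' with a ``careful ordering of the unfolds,'' but this is precisely the step that cannot be dispatched by ordering alone. The limit configuration is a single-vertex fold whose link on a small sphere about $p$ is an open spherical polygonal chain of fixed-length great-circle arcs; unfolding one crease at a time while keeping the others rigid is a local, greedy motion of that spherical linkage, and there is no reason such a greedy strategy reaches the straight configuration without self-intersection --- this reachability question \emph{is} the spherical carpenter's rule problem. The paper closes the case by invoking the known single-vertex-origami/spherical-expansive-motion theorem of Streinu and Whiteley~\cite{StrWhi-JCDCG-05}, which produces a global (expansive) straightening motion of the link and hence of the folded state; your proof needs that citation or an equivalent argument. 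Two smaller omissions in the same case: you do not verify that the limit $f_0$ is itself a valid (non-self-intersecting) polygonal embedding --- the paper does this by noting that points at equal distance from $p$ lie on distinct points of a scaled copy of the non-self-intersecting link --- and you do not use the observation that multiple fold lines through a boundary point force the shrinking motion to be the linear (star-shaped) one, which is what makes the link invariant during the shrinking phase.
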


\begin{proof}
The same idea as above comes close to working: compose the inward-spiraling shrinking motion of the domain, the initial embedding $f$, an expansion of $\mathbb{R}^3$ centered at the limit point of the motion, and a rotation of $\mathbb{R}^3$ that cancels any spinning motion the inward-spiraling motion might have. However, the limit point $p$ of the inward-spiraling shrinking motion might be a point on a fold line of the polygonal embedding, or worse, it might be a boundary point of $S$ where multiple fold lines meet. In this case, there is not a unique tangent plane of $f(S)$ at $p$, and when the same composition can be made to have a limit, this limit will be folded at $p$ in the same way as it was in $f(p)$ rather than being flat.

To address these issues, when $p$ is a folding point of the embedding $f(S)$, we choose one of the polygonal faces incident to $p$ to determine the tangent plane for the previous construction. Then as above we compose the inward-spiraling shrinking motion of the domain, the initial embedding $f$, an expansion of $\mathbb{R}^3$ centered at the limit point of the motion, and a rotation of $\mathbb{R}^3$ that cancels any spinning motion of this tangent plane. The resulting composition defines a continuous motion over the half-open interval of parameter values $(0,1]$, which can be extended with a well-defined limit at $0$, a polygonal folded state that has the same fold lines and fold angles as $f(S)$ at $p$ and is flat everywhere else.

We distinguish three cases:
\begin{itemize}
\item If $p$ is not a folding point of the embedding $f(S)$, then the previous proof applies directly.
\item If $p$ belongs to a single fold line of the embedding $f(S)$, then we can concatenate two continuous motions. First we perform the motion from the previous proof that transforms $f(S)$ into a folded state that contains this fold line. Because this folded state has only one fold line, it cannot self-intersect, and forms a valid polygonal embedding. Next, we perform an additional continuous motion that linearly changes the angle of this fold from its initial value to $\pi$ (unfolded into a flat angle). Again, none of the intermediate states of this second continuous motion can self-intersect.
\item In the remaining case, $p$ is a boundary point of $S$ that belongs to multiple fold lines within $S$. For a line from $p$ to intersect $S$ in a line segment, it must necessarily be the case that the inward-spiraling shrinking motion is actually the linear shrinking motion of a star-shaped domain. The \emph{link} of this folded state at $p$, the intersection of $f_0(S)$ with a small sphere centered at $p$, is a polygonal chain consisting of arcs of great circles. Because it remains invariant throughout the motion, it does not self-intersect. Any two distinct points of $S$ at distance $d$ from $p$ lie on distinct points of a scaled copy of this link, on a sphere of radius $d$, and for this reason cannot coincide. Therefore, the folded state at $f_0$ is again a valid polygonal embedding.

By known results on the spherical carpenter's rule problem, there exists a continuous motion of the link, as a polygonal chain of fixed-length great-circle arcs on the sphere, from its folded state to a completely flat state. This motion induces a continuous motion of $f_0$ to a flat-folded state~\cite{StrWhi-JCDCG-05}. Performing the continuous motion from $f(S)$ to $f_0(S)$, and then using this carpenter's rule solution on the resulting single-vertex surface, produces a combined continous motion from $f(S)$ to a flat state.\qedhere
\end{itemize}
\end{proof}

To apply the same method to flat foldings without interior vertices, we cannot use the carpenter's rule problem, as 1d flat-folded polygonal chains with fixed edge lengths cannot change continuously. Instead, we use a one-dimensional version of \cref{thm:unfold-polygonal}:

\begin{lemma}
\label{lem:1d-carpenter}
Let $P$ be line segment, folded flat by a piecewise-isometry $f:P\to R$ with a finite number of fold points and a consistent above-below relation for points with the same image. Then there exists a continuous motion of flat foldings of $P$ that transforms folding $f$ into an unfolded state.
\end{lemma}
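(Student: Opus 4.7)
The plan is to specialize the argument of \cref{thm:unfold-polygonal} to one dimension. A line segment is trivially a spiral-shaped domain: it shrinks linearly toward any of its points, and crucially we are free to take the center $p$ of the shrinking motion to be an \emph{endpoint} of $P$. Because the fold points of $f$ lie in the interior of $P$, this choice guarantees that $p$ is not itself a fold point, so we land in the simplest case of the polygonal proof and no carpenter's-rule analogue is required to finish.

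Concretely, parameterize $P$ as $[0,L]$, set $p=0$, and let the fold points of $f$ be $0<t_1<\dots<t_k<L$. For $i\in(0,1]$ let $s_i\colon P\to P$ be the linear contraction $s_i(q)=iq$, and let $E_{1/i}$ denote the expansion of the image line by factor $1/i$ about $f(p)$. Define $f_i=E_{1/i}\circ f\circ s_i$. The contraction in the domain cancels the expansion in the target, so each $f_i$ is piecewise isometric, and its above-below relation is transported from that of $f$ via the bijection $s_i$. The fold points of $f_i$ sit at $q=t_j/i$; as $i$ decreases they move rightward toward the endpoint $q=L$, and each leaves $P$ once $i$ drops below $t_j/L$. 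Once $i\le t_1/L$, no fold points remain in $P$ and $f_i$ reduces to the single unfolded isometric embedding $q\mapsto f(p)\pm q$, independent of $i$.

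The main steps to verify are then: (a) that each $f_i$ for $i\in(0,1]$ is a valid flat folding (piecewise isometric with a consistent above-below relation); (b) that $i\mapsto f_i$ is continuous in the sup-norm and compatible with the local stacking order, in particular across each transition value $i=t_j/L$ at which a fold exits through the endpoint; and (c) that the motion extends continuously to $i=0$ using the unfolded embedding of the previous paragraph. Reversing the parameter then gives the required continuous motion from $f=f_1$ to the flat folding $f_0$.

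Step (b) is the only delicate one. At $i=t_j/L$ the portion of $P$ mapped past the exiting fold has image length $iL-t_j\to 0$, so the layer lying above or below its neighbour collapses to a single point before being eliminated; this keeps the pointwise map continuous in $i$ and makes the change in the above-below relation consistent with a continuous motion of flat foldings. Steps (a) and (c) then follow directly from the $s_i$--$E_{1/i}$ cancellation and from the fact that $f_i$ is literally constant on $(0,t_1/L]$.
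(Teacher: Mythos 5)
Your proposal is correct and matches the paper's proof: both shrink $P$ toward a non-fold point $p$, apply $f$, and re-expand about $f(p)$, so that the folds migrate off the segment and the map becomes an unfolding once $i$ is small enough. Your specific choice of $p$ as an endpoint (automatically not a fold point) and your explicit check that each flap collapses to a point as its fold exits are just slightly more detailed versions of the same argument.
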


\begin{proof}
Chose any point $p$ of $P$ that is not a fold point, and form a continuous family of one-dimensional folded states of $P$, $f_i(P)$ for $i\in(0,1]$, by scaling $P$ by a factor of $i$ centered at $p$, applying $f$, and scaling the result by a factor of $1/i$ centered at $f(p)$. When $i$ becomes less than the distance from $p$ to the nearest fold, the result will be an unfolding of $P$, so this provides a continuous transformation from $f$ to an unfolding.
\end{proof}

\begin{theorem}
Every flat folding without interior vertices of a bounded spiral-shaped domain has a continuous unfolding motion through foldings of the same type.
\end{theorem}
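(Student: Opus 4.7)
The plan is to follow the three-case skeleton of the proof of \cref{thm:unfold-polygonal}, with \cref{lem:1d-carpenter} replacing the spherical carpenter's rule. First I would reuse the same composition as twice before: the inward-spiraling shrinking motion $s_i$ of $S$, the initial flat folding $f$, an expansion of $\mathbb{R}^2$ centered at $f(p)$, and a planar rotation about $f(p)$ cancelling the angular drift of the spiral; when $p$ is a fold point, the reference direction for the rotation is drawn from one polygonal face of $f$ incident to $p$. This gives a continuous family $f_i$ of flat foldings on $(0,1]$ extending to a well-defined limit $f_0$ whose fold lines are precisely the radial extensions across $S$ of the fold lines of $f$ that pass through $p$.

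The case analysis then splits on how $p$ meets the fold pattern. If $p$ is not a fold point of $f$, the limit $f_0$ is already flat. If $p$ is a boundary point of $S$ at which one or more fold lines meet, then the intersection of $S$ with a small circular arc around $p$ is a one-dimensional line segment whose image under $f_0$ is a piecewise-isometric flat folding of exactly the kind handled by \cref{lem:1d-carpenter}. The lemma produces a continuous unfolding of this link, and because every fold of $f_0$ is radial from $p$, this unfolding extends outward to every concentric arc by radial scaling, producing a continuous motion of flat foldings of $S$ that carries $f_0$ to a flat embedding. In the remaining case, $p$ is an interior point lying on a single fold line $\ell$, and I would apply a two-dimensional analogue of the scaling argument inside the proof of \cref{lem:1d-carpenter}: scale $S$ in the direction perpendicular to $\ell$ about an axis parallel to but disjoint from $\ell$, compose with the inverse expansion in the image, and observe that for a small enough scale factor the scaled copy of $S$ sits entirely on one side of $\ell$, erasing the single remaining fold.

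The main obstacle I anticipate is this last subcase. In \cref{thm:unfold-polygonal} the analogous situation was resolved trivially by continuously increasing the dihedral angle of $\ell$ to $\pi$, but the flat-folding model does not admit intermediate angles, so continuity has to be extracted from a scaling argument instead. The care required is to choose the scaling axis close enough to $\ell$ that the scaled copy of $S$ remains inside the original domain throughout the motion and $f_0$ stays defined on it; this is where the boundedness of $S$ and its spiral-shaped structure, which guarantees that arbitrarily small shrunken copies of $S$ fit inside $S$ near any chosen point, enter the argument. Concatenating the initial spiral shrinking with either the link unfolding from \cref{lem:1d-carpenter} or this two-dimensional scaling unfolding then produces the desired continuous motion through flat foldings from $f$ to a flat embedding.
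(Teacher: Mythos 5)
Your proposal follows the paper's proof in all essentials: the same shrink--expand--rotate composition producing a limit $f_0$ all of whose folds pass through the spiral center $p$, the same three-way case split, and \cref{lem:1d-carpenter} coned radially over $p$ in the multiple-folds-at-a-boundary-point case. The one step whose justification would fail as written is the single-fold case. An anisotropic contraction $c_j$ of $S$ toward an axis $m$ parallel to $\ell$ does not map $S$ into itself, no matter how close $m$ is chosen to $\ell$ (already for a round disk with $\ell$ a diameter, the contracted copy bulges outside the disk at the points farthest from $m$), and the spiral-shaped property cannot rescue this: it only guarantees that isotropically scaled and rotated copies centered at $p$ lie inside $S$. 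So the composition $f_0\circ c_j$ is not defined on all of $S$ the way you set it up. The repair is short and reveals that your construction coincides with the paper's: since $f_0$ has exactly one fold line, it is the restriction to $S$ of the global map folding the entire plane along $\ell$, and composing that global fold with $c_j$ and the inverse expansion yields precisely the piecewise isometry that folds the plane along a line parallel to $\ell$ whose distance from $\ell$ increases continuously from $0$ to $\infty$ as $j$ decreases from $1$ to $0$. In other words, your scaling argument parameterizes ``rolling the single fold parallel to itself until it exits the bounded domain,'' which is exactly what the paper does directly (and which needs no containment condition at all, since a one-fold flat folding along a translating line is defined on all of $S$ and never violates the stacking order).
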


\begin{proof}
The proof follows the same outline as \cref{thm:unfold-polygonal}, combining shrinking and unshrinking to reach a folded state in which all folds pass through the center  $p$ of the spiral shrinking transformations. If the result has a single fold through $p$ we roll this fold to the boundary of $S$ rather than changing its fold angle. If $p$ lies on the boundary of the domain and belongs to multiple fold lines, we apply \cref{lem:1d-carpenter} in place of the spherical carpenter's rule.
\end{proof}

\section{An instructive example}
\begin{figure}[t]
\centering\includegraphics[width=0.8\columnwidth]{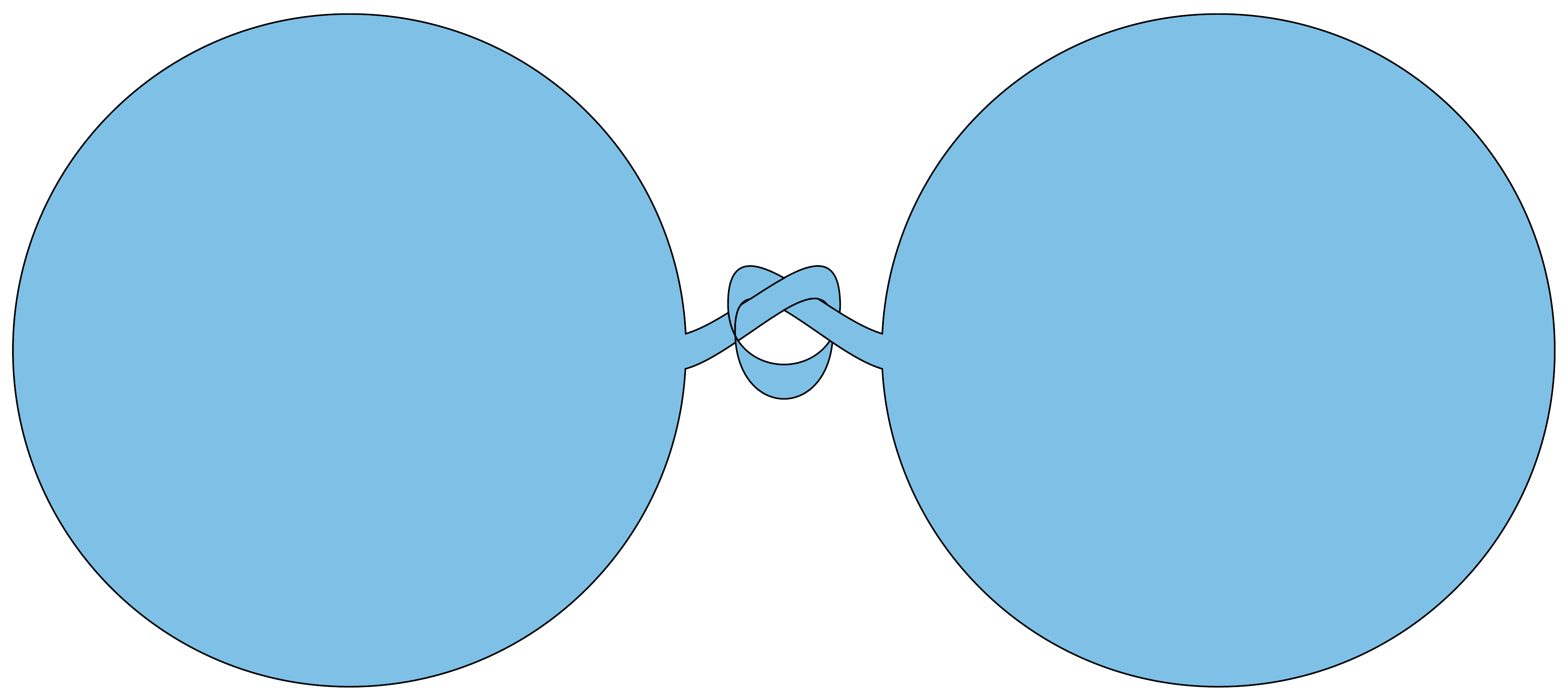}
\caption{Two disks connected by a knotted band. This surface can be flattened by a continuous motion of smooth embeddings.}
\label{fig:knotted-dumbbell}
\end{figure}

\cref{fig:knotted-dumbbell} shows two large disks connected by a short thin band, embedded with the disks spread flat and the band tied into an open overhead knot. If the disks could be crumpled, it would be easy to untie and flatten, by crumpling the disks into small enough balls that they could be passed through the knot, and then uncrumpling. However, our model of smooth surface embeddings does not allow crumpling. In every smooth embedding, the center point of each disk lies in a flat subset of the disk with large diameter: either a diameter of the disk, or a triangular subset of the disk with the vertices of this triangle on the boundary of the disk.

Rolling up either disk around a diameter makes that diameter act like a rigid line segment, but gives the rest of the disk a smaller overall shape. Rolling both disks in this way can produce an embedding like the locked polygonal chain with long ``knitting needles'' at its ends from Figure~1 of Biedl et al.~\cite{BieDemDem-DCG-01}. Any other rigid configuration for the two disks would be similarly locked. However, this surface is unlocked! It can be unfolded through the following sequence of transformations:
\begin{itemize}
\item Let $D$ be a diameter of one of the two disks, touching the boundary of the disk at its attachment point $p$ with the knotted band. Roll up the disk around $D$, starting at one of the points of the disk that is farthest from $D$, and leaving the semicircle opposite that point exposed on the outside of the roll.
\item Poke $p$ into the hole made by the knotted band, so that if the rolled-up disk around $D$ were not rigid, it could pass through the hole and untie the knot. However, because $D$ is made rigid by the bending of the embedding as it rolls around $D$, only the very end of diameter $D$ near $p$ can pass into the hole.
\item Continuously spin the parallel family of bend lines on the rolled-up disk, so that it rolls up around a different diameter than $D$. Choose the direction of spin that causes $p$ to travel along the exposed semicircle along the rolled-up disk. As it does so, this will allow more of the diameter of the rolled-up disk to poke through the hole in the knot.
\item When the bend lines have spun on the disk through an angle of $\pi$, causing $p$ to reach the other end of the exposed semicircle, the rolled-up diameter will have traveled all of the way through the hole in the knot, which will become unknotted.
\item Unroll the disk so that it lies flat with the rest of the surface.
\end{itemize}

Thus, although bend lines of a smoothly embedded surface are rigid, the underlying pattern of these lines on the surface can change continuously, complicating the search for a proof that a surface is locked.

\section{A locked surface}

\begin{figure}[t]
\centering\includegraphics[width=0.8\columnwidth]{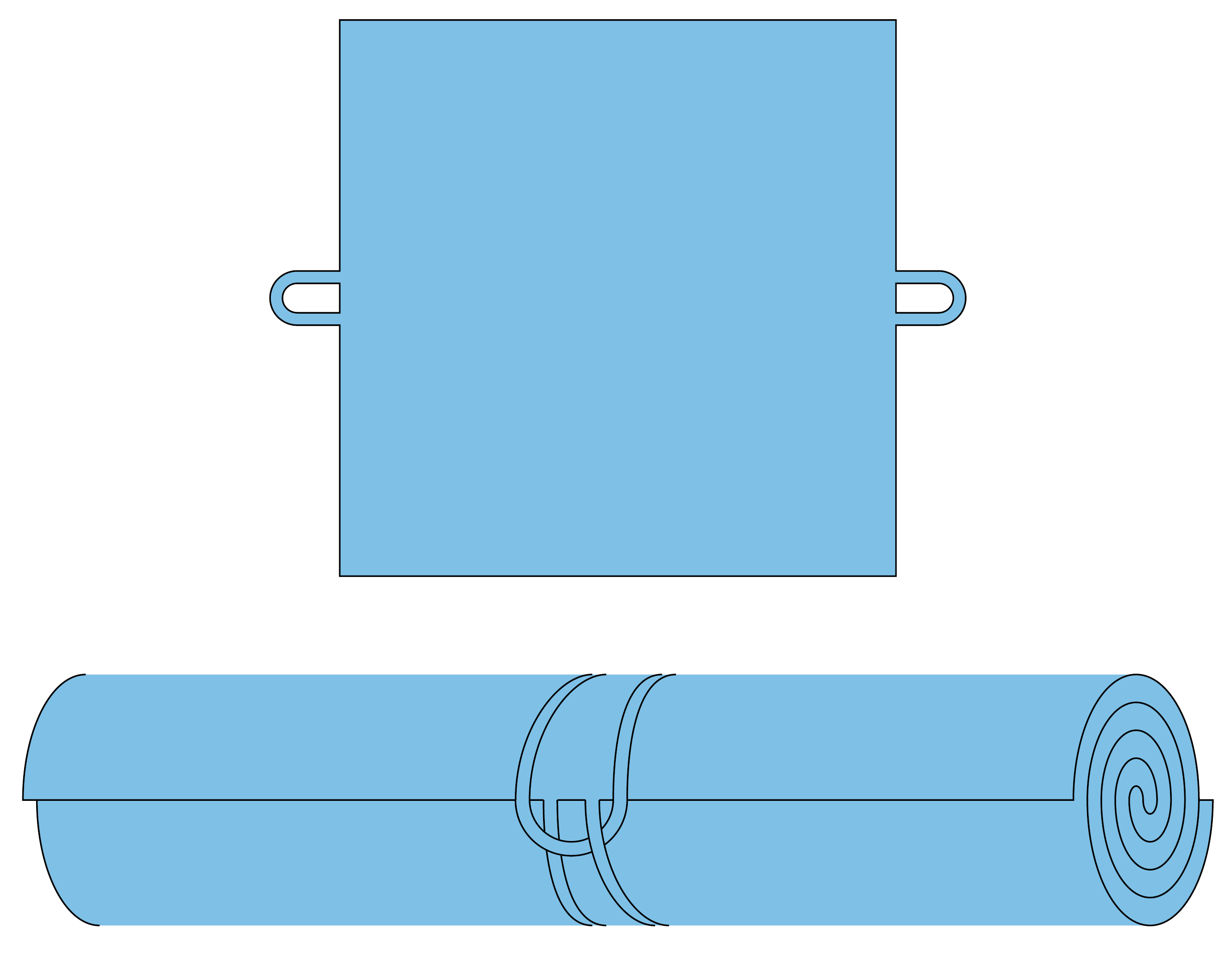}
\caption{A flat surface with two holes and a configuration that cannot be flattened. Intuitively, the two interlocked loops prevent the rolled center region from unrolling, the bend lines of the roll make it act like a rigid rod, and the length of this rod prevents the loops from being pulled around its ends. For the way the two loops on the hidden side of the roll interlock, see \cref{fig:borromean-cutaway}.}
\label{fig:tied-roll}
\end{figure}

\cref{fig:tied-roll} depicts a smoothly-embedded unit square with two small loops on midpoints of opposite sides, wrapped into a spiral roll  with the loops interlocked. \cref{fig:borromean-cutaway} provides a cutaway view showing how the loops interlock.

\begin{figure}[t]
\centering\includegraphics[width=0.8\columnwidth]{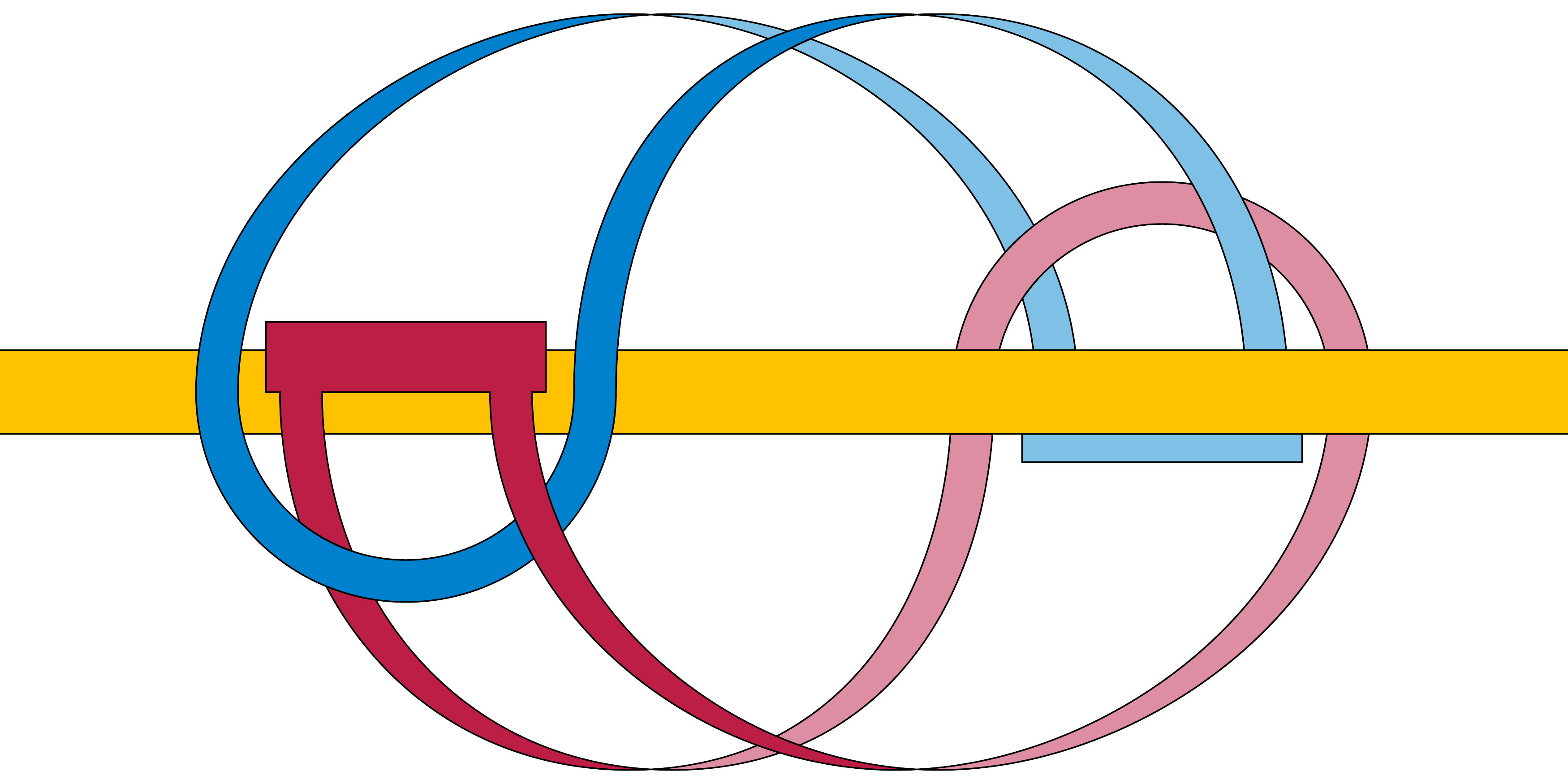}
\caption{Cutaway view of the two loops and centerline of the rolled-up part of the surface of \cref{fig:tied-roll}, showing their topological equivalence to the Borromean rings}
\label{fig:borromean-cutaway}
\end{figure}

This rolled-up and interlocked surface is topologically equivalent (ambient isotopic) to a flattened surface.
As can be seen in \cref{fig:borromean-cutaway}, the interlocking pattern is that of the Borromean rings, three unknotted loops in space that cannot be separated from each other, in which any two of the loops become unlinked if the third loop is removed. In the figure, the role of one of these three loops is taken by the center of the spiral roll, which does not actually form a loop, so this embedding is not topologically linked. From the arrangement of the loops in \cref{fig:borromean-cutaway}, it is possible to unlink it by pulling the right half of the red band to the right and up, around the right end of the yellow spiral center, passing this part of the red band around the right half of the blue band, and then passing the same part of the red band back to the right and down around the right end of the yellow spiral center. This sequence of motions reverses the front-back order of the two red-blue crossings on the right half of the red band, after which it is straightforward to flatten the whole surface. We have also verified that this configuration is topologically unlocked using a physical model of two rubber bands attached to a pen.

As a smoothly-embedded surface, \cref{fig:tied-roll} is locked: it cannot be flattened while preserving its geometry. To prove this, we use three interlocked properties of the embedding that, like the Borromean rings, are interlinked: as long as any two of the properties remain valid, the third one must remain valid as well, so none of the three properties can be the first to break in any continuous motion of the embedding. The first two properties are parameterized by a parameter $\varepsilon$, which we will be able to make arbitrarily small by making the length of the loops sufficiently small:

\begin{enumerate}[$A$.]
\item The two loops are bounded within distance $\varepsilon$ of each other.
\item The nearest bend line on the square to its center point has distance $\le\varepsilon$ to the center point, and  crosses the top and bottom sides of the square at distance $\ge 1/4-\varepsilon$ from its left and right sides.
\item The nearest bend line on the square to its center point can be completed into a loop by a curve, on a sphere with it as diameter, forming Borromean rings with the other two loops.
\end{enumerate}

\begin{figure}[t]
\centering\includegraphics[width=0.45\columnwidth]{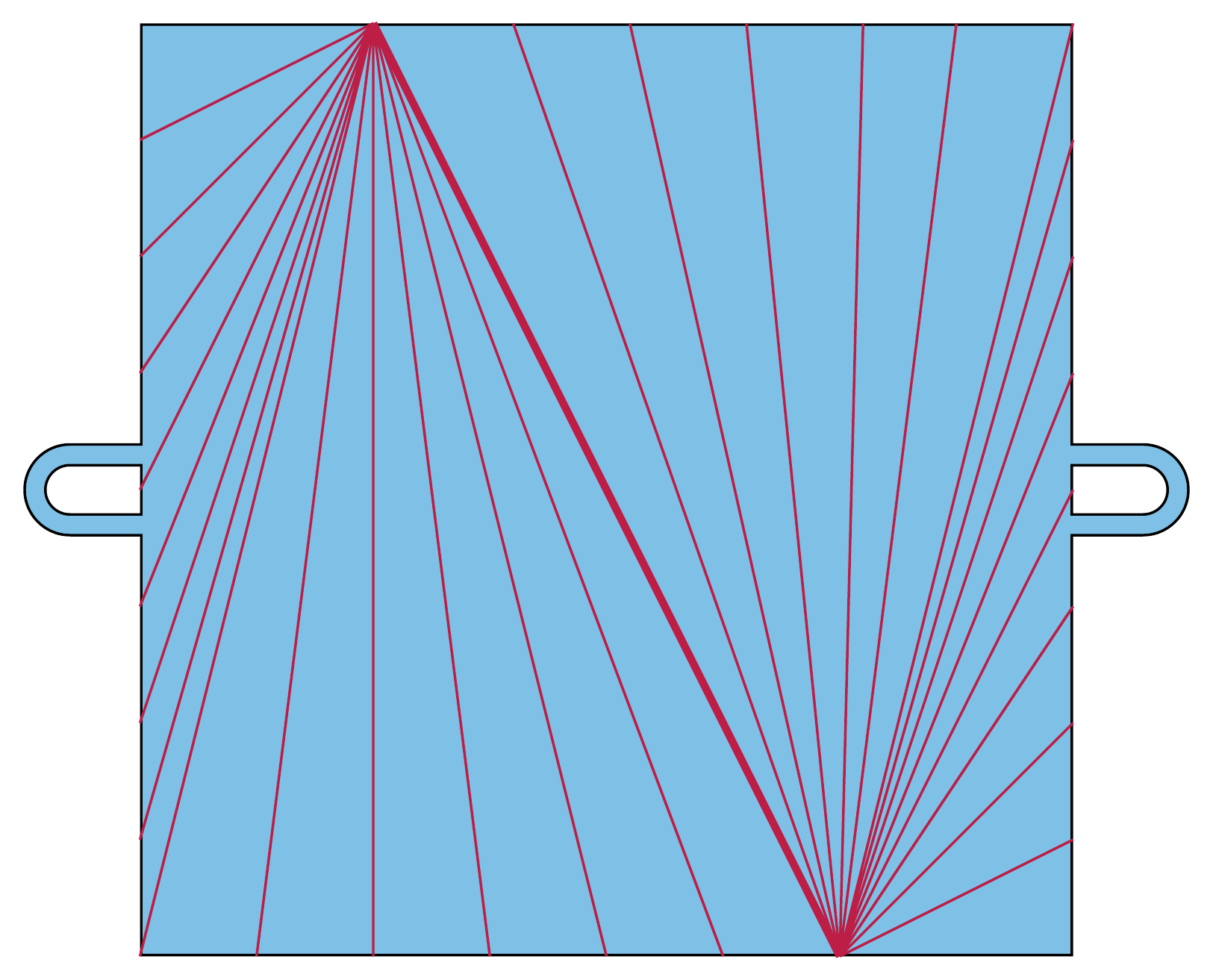}
\caption{Bend lines for a deformed version of \cref{fig:tied-roll} in which the (heavier) bend line through the center of the square crosses the top and bottom sides of the square at distance $1/4$ from the sides. For these bend lines, the two loops still meet near the center of the rolled surface.}
\label{fig:bent-roll}
\end{figure}

The reason for the $1/4$ in property $B$ is that it is possible to deform \cref{fig:tied-roll}, keeping the two loops interlocked in the center of the roll, so that the center bend line crosses the top and bottom sides of the square at distance $1/4$, as shown in \cref{fig:bent-roll}. However, as we will prove, it is not possible to move this center bend line significantly farther from vertical.

\begin{lemma}[$A\wedge B\Rightarrow C$]
\label{lem:AB2C}
For all sufficiently small loop lengths $\delta$ and all sufficiently small $\varepsilon$, continuous motion through states where $A(\varepsilon)$ and $B(\varepsilon)$ hold, starting from a state where $C$ holds, cannot reach a state where $C$ does not hold.
\end{lemma}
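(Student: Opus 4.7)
The plan is to treat $C$ as a topological invariant preserved along the motion, via a continuity argument applied to Milnor's triple-linking number of the ordered three-component link $(L_0^t, L_1^t, L_2^t)$. Here $L_1^t$ and $L_2^t$ are the two small loops at time $t$, and $L_0^t = \ell_t \cup H_t$ is the closed curve obtained by completing the center-nearest bend line $\ell_t$ with a hemispheric arc $H_t$ on the diameter sphere $S_t$. Property $C$ is precisely the assertion that \emph{some} hemispheric completion renders this link Borromean; by hypothesis such a completion $H_0$ exists at $t=0$, and the goal is to propagate a Borromean completion continuously through the motion.

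The first observation is that $\ell_t$, $L_1^t$, and $L_2^t$ are pairwise disjoint subsets of the embedded surface, so injectivity of the embedding keeps them pairwise disjoint in $\mathbb{R}^3$ throughout. The only crossing that can therefore alter the link type involves $H_t$ meeting a small loop. Property $B(\varepsilon)$ keeps $\ell_t$ away from degeneracy (its endpoints sit in the open middle halves of the top and bottom sides and so depend continuously on the embedding), which means $S_t$ varies continuously and a continuous transport of $H_t$ on the moving $S_t$ is well-defined starting from $H_0$.

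The quantitative core of the proof uses $A(\varepsilon)$ and $B(\varepsilon)$ to confine the two loops to a small region of $\mathbb{R}^3$ relative to the size of $S_t$. Each loop attachment point lies at intrinsic distance $1/2$ from the image $c_t$ of the square's center, hence at extrinsic distance at most $1/2$ from $c_t$; property $A$ then confines the two attachments within $\varepsilon$ of one another, and property $B$ forces the midpoint of $\ell_t$ to within $O(\varepsilon)$ of $c_t$. For $\delta$ and $\varepsilon$ sufficiently small, transversality lets me detour $H_t$ on $S_t$ around any transient intersections of the loops with the sphere without altering its isotopy class in the complement of $L_1^t \cup L_2^t$. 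The resulting family $(L_0^t, L_1^t, L_2^t)$ is a continuous motion of disjoint three-component links whose Milnor triple-linking number is therefore constant, so starting Borromean it remains Borromean and $C$ is preserved.

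The hard part will be making this geometric confinement rigorous: the attachment points can in principle approach $S_t$, whose radius is close to $1/2$, so showing that $A(\varepsilon) \wedge B(\varepsilon)$ always leaves an unobstructed Borromean-compatible hemispheric arc on $S_t$ requires carefully combining the intrinsic distance bounds on the square with the extrinsic geometry of the moving bend line and small loops, and choosing the $\delta$ and $\varepsilon$ thresholds quantitatively in terms of the resulting clearances on $S_t$.
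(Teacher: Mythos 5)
Your overall strategy is the same as the paper's: complete the center-nearest bend line to a closed curve by an arc on the sphere having that bend line as diameter, transport the completion continuously as the bend line rolls, and argue that the two small loops can never reach that sphere, so the three-component link type (which you formalize, reasonably, via Milnor's triple linking number) cannot change. Your observation that the bend line and the two loops are disjoint on the surface, hence in space, so that only a crossing of the completing arc could change the link type, also matches the paper's reasoning.

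The genuine gap is exactly the step you flag as ``the hard part'': your confinement estimate does not work, and you do not supply the idea that makes it work. Bounding each attachment point by its intrinsic distance $1/2$ to the image of the square's center gives only a ball of radius $1/2$ about $c_t$, while $S_t$ has radius $|b|/2\ge 1/2$ and is centered within $O(\varepsilon)$ of $c_t$; that ball meets $S_t$, so nothing prevents a loop from touching the completing arc. The paper's proof uses a sharper confinement (\cref{fig:center-bend}): each attachment point must lie in the intersection of \emph{two} balls centered at the \emph{endpoints} of $b$, with radii equal to the intrinsic distances to those endpoints. Because $B(\varepsilon)$ keeps the endpoints of $b$ at distance at least $1/4-\varepsilon$ from the left and right sides, these radii are at most about $\sqrt{(3/4)^2+(1/2)^2}=\sqrt{13}/4<1\le|b|$, so each lens concentrates near the perpendicular bisector plane of $b$; intersecting the left and right lenses (which property $A$ forces to nearly meet) and using the current truth of $C$ to keep the loops near the line $b$ itself pins the attachment points to the middle portion of the segment $b$, at distance bounded below by an absolute constant, minus $O(\varepsilon+\delta)$, from the sphere through $b$'s endpoints. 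Note also that the paper bootstraps on $C$ here --- the loops are near line $b$ \emph{because} the Borromean configuration still holds up to the first instant at which it could fail --- whereas you try to derive the confinement from $A$ and $B$ alone, which is precisely why your estimate falls short. Without the two-endpoint lens argument (or an equivalent), the geometric core of the lemma is unproved.
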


\begin{proof}
Let $b$ be the bend line of property~$B$. Then (\cref{fig:center-bend}) the attachment point $p$ of the right loop must lie within the intersection of two spheres centered at the endpoints of $b$, with radii equal to the distances from $p$ to those endpoints, because the embeddings we consider are not allowed to increase distances. Similarly, the attachment point of the left loop must lie within the intersection of another two spheres. By property $A$, these attachment points must be near each other, and until $C$ stops holding, they must also lie near line $b$. This limits their nearby locations to points of line $b$ that are far from its endpoints on the square, so they cannot reach the sphere through the endpoints of $b$ on which the connecting curve of property~$C$ lies.

As the smooth embedding deforms continuously, the bend line nearest the center point can change, but (as long as $B$ continues to hold) only by small amounts, and the connecting curve can be changed by similar small amounts to maintain property $C$. As the other two loops cannot reach the sphere containing this curve, they cannot cross this curve (even though it does not form a physical obstacle to them) and cannot change the knotted topology that it forms with them.
\end{proof}

\begin{lemma}[$A\wedge C\Rightarrow B$]
\label{lem:AC2B}
For all sufficiently small $\varepsilon$ there exists a $\delta$ such that, for loop lengths less than $\delta$,
continuous motion through states where $A(2\varepsilon)$ and $C$ hold, starting from a state where $B(\varepsilon)$ holds,
cannot reach a state where $B(\varepsilon)$ does not hold.
\end{lemma}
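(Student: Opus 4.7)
The plan is to mirror the structure of Lemma~\ref{lem:AB2C} by arguing that a first violation of $B(\varepsilon)$ along a motion through states satisfying $A(2\varepsilon)$ and $C$ is impossible. Since the embedding varies continuously and bend lines of a smooth developable surface are straight segments extending between boundary points, the family of bend lines evolves continuously in the intrinsic unit square away from a discrete set of combinatorial events (creation, annihilation, or reordering of bend lines). At the first moment $t^*$ when $B(\varepsilon)$ would cease to hold, one of the two inequalities in $B$ must be saturated: either the nearest bend line $b$ to the center is at intrinsic distance exactly $\varepsilon$ from the center, or $b$ crosses the top or bottom side of the square at intrinsic distance exactly $1/4-\varepsilon$ from a left or right side. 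The goal is to derive a contradiction in each case.

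For the tilt case, suppose $b$ crosses the top at $e_1 = (1/4-\varepsilon, 1)$ (other cases are symmetric). The intrinsic distances from the two loop attachments near $(0,1/2)$ and $(1,1/2)$ to the endpoint $e_1$ differ by a constant of order $1/2$; a similar imbalance occurs at the other endpoint $e_2$. As in the proof of Lemma~\ref{lem:AB2C}, because isometric embeddings do not increase Euclidean distances, each loop attachment lies in the intersection of two closed balls centered at $f(e_1)$ and $f(e_2)$ with radii equal to these intrinsic distances. Property $C$ then requires a curve on the sphere of diameter $\overline{f(e_1)f(e_2)}$ to complete $b$ into a loop forming Borromean rings with the two actual loops, so each loop must sit close to this sphere in a linked configuration. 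Combining $A(2\varepsilon)$ (loops are near each other in 3D), $C$ (loops sit on the Borromean sphere), and the asymmetry of intrinsic radii forces contradictory constraints on the 3D positions of the two loops once $\varepsilon$ and $\delta$ are small enough relative to the constant $1/4$.

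For the recession case, suppose instead that $b$ is at intrinsic distance exactly $\varepsilon$ from the center while the tilt condition still holds strictly. Then a closed disk of intrinsic radius $\varepsilon$ around the center is flat, so its image is a Euclidean disk in $\mathbb{R}^3$. The two loop attachments lie on opposite intrinsic sides of $b$, so any embedding keeping them within $2\varepsilon$ of each other must realize a near-$\pi$ fold along a chain of bend lines including $b$. Because $b$ plays the role of the central ring of the Borromean triple in $C$ and must host the completing curve on the sphere through its endpoints, the flat central disk immediately interior to $b$ plus the Borromean linking constraint pin the 3D positions of both loops to a small neighborhood of the midpoint of $b$; arguing as in Lemma~\ref{lem:AB2C} with $b$ playing the role of the central bend line shows that the loops' 3D positions required by $A$ and $C$ force $b$ back toward the center, contradicting the assumed saturation at distance $\varepsilon$.

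The main technical obstacle will be handling the combinatorial events at which the nearest-bend-line function is discontinuous. At such events two bend lines become equidistant from the center and the role of ``nearest bend line'' may jump from one to the other. I plan to handle this by tracking a continuous family of candidate bend lines in a neighborhood of any such event, and verifying that property $C$ holds compatibly for both candidates at the crossover, so that the arguments above apply to whichever bend line realizes the nearest role on each side. A secondary difficulty is making the Borromean constraint in the second case quantitative enough to rule out all perturbations of $b$, which will likely require a careful comparison of the sphere determined by $b$ with the flat central disk to confirm that no loop can thread the completing curve unless the fold is concentrated very near the center.
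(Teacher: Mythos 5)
Your tilt case is essentially the paper's argument: the attachment point of each loop is confined (by the non-expansiveness of the embedding) to the intersection of two balls centered at $f(e_1)$ and $f(e_2)$, and property $C$ confines each loop further to a short sub-interval of the segment $b$; once $b$ is tilted so that it meets the top or bottom side closer than $1/4-\varepsilon$ to a left or right side, the sub-interval for the left attachment and the sub-interval for the right attachment become disjoint by more than $2\varepsilon+O(\delta)$, contradicting $A(2\varepsilon)$. You gesture at this with ``asymmetry of intrinsic radii forces contradictory constraints,'' and it would be worth making the interval computation explicit, but the idea matches \cref{lem:AB2C} and \cref{fig:center-bend}. (One misstatement: the loops do not ``sit close to'' the sphere having $b$ as diameter --- the point of \cref{lem:AB2C} is that they are pinned near the midpoint of $b$ and hence \emph{far} from that sphere, which is why the completing curve is unreachable.)

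The genuine gap is in your recession case, and it is not the ``secondary difficulty'' you defer to the end --- it is the main missing idea. Your plan is to handle saturation of the distance-to-center condition ``by arguing as in Lemma~\ref{lem:AB2C} with $b$ playing the role of the central bend line,'' but that machinery provably cannot do the job: for a near-vertical bend line at horizontal distance $d$ from the center, the two ball-intersection intervals are $[1-r_R,r_R]$ and $[1-r_L,r_L]$ with $r_L,r_R>1/2$, so both contain the midpoint parameter and always intersect, for every $d<1/2$. No contradiction with $A$ or $C$ arises this way, so nothing ``forces $b$ back toward the center.'' The paper's argument for this half of $B$ is different in kind: since $b$ is the bend line \emph{nearest} the center, the entire intrinsic disk of radius $d$ about the center is rigidly planar, and $b$ bounds that flat strip; for the two loops, each of length less than $\delta$, to clasp each other around the $b$-loop in the Borromean pattern required by $C$, they must wrap around the surface at $b$, which is only possible if the flattened piece on which $b$ lies has width at most proportional to $\delta$. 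Hence $d=O(\delta)<\varepsilon$ for $\delta$ small. Your proposal never bounds the width of the flat central piece by the loop length, and your invocation of a ``near-$\pi$ fold along a chain of bend lines including $b$'' does not substitute for it, since such a fold can occur with $b$ arbitrarily far from the center. Separately, the effort you devote to combinatorial jumps of the nearest-bend-line function is not where the difficulty lies; the $2\varepsilon$-versus-$\varepsilon$ slack in the lemma statements is what absorbs that continuity issue in the paper's overall scheme.
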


\begin{proof}
Let $b$ be the bend line nearest the center of the square. While $A$ and $C$ hold, $b$ must pass through the Borromean link formed by it and the two loops, and so (if the center point itself does not lie on a bend line) it must lie on a flattened part of the surface whose width is at most proportional to the loop length. Therefore, $b$ is close to the center point, as part of property $B$ demands. For a bend line that is close to the center point, the same reasoning used in \cref{fig:center-bend} and \cref{lem:AB2C} shows that it must be at distance at least $1/4-\varepsilon$ from the left and right sides, for otherwise the two intervals on this bend line where the left and right loop attachment points must be near would not intersect.
\end{proof}

\begin{lemma}[$B\wedge C\Rightarrow A$]
\label{lem:BC2A}
For all sufficiently small $\varepsilon$ there exists a $\delta$ such that, for loop lengths less than $\delta$,
continuous motion through states where $B(2\varepsilon)$ and $C$ hold, starting from a state where $A(\varepsilon)$ holds,
cannot reach a state where $A(\varepsilon)$ does not hold.
\end{lemma}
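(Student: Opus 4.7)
The plan is to argue by contradiction at the first time $t^*$ at which $d(L_1,L_2)$ reaches $\varepsilon$: assuming the motion keeps $B(2\varepsilon)$ and $C$ throughout and starts inside $A(\varepsilon)$, the goal is to show that at any such $t^*$ the properties $B(2\varepsilon)\wedge C$, together with a loop length $\delta\ll\varepsilon$, already force $d(L_1,L_2)<\varepsilon$. This closes the argument, since no such $t^*$ can exist and $A(\varepsilon)$ must be preserved throughout the motion.

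The geometric groundwork comes from $B(2\varepsilon)$ combined with the sphere-intersection reasoning of \cref{lem:AB2C}. The bend line $b$ is a straight segment in $\mathbb{R}^3$ of length close to $1$, whose intrinsic endpoints lie in the middle halves of the top and bottom of the square, so the sphere $\Sigma$ of diameter $b$ has radius close to $1/2$ in $\mathbb{R}^3$. Each attachment point $p_L,p_R$ lies in the intersection of two closed balls centered at the endpoints of $b$ with radii equal to its intrinsic distances to those endpoints, pinning it to a lens-shaped region around the midpoint of $b$; each loop then lies within distance $\delta$ of its respective attachment point. Property $C$ furnishes a completing curve $\gamma_0\subset\Sigma$ so that $\gamma:=b\cup\gamma_0$ forms Borromean rings with $L_1$ and $L_2$.

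The heart of the argument is a topological step exploiting non-splittability of Borromean links: no $2$-sphere in $\mathbb{R}^3$ can separate any one component from the other two. Because $\mathrm{lk}(L_1,\gamma)=0$, the loop $L_1$ bounds an embedded disk $D_1\subset\mathbb{R}^3\setminus\gamma$; no such $D_1$ can also miss $L_2$, since otherwise $L_1$ would be null-homotopic in $\mathbb{R}^3\setminus(\gamma\cup L_2)$, and a thin normal neighborhood of the disk would enclose $L_1$ in a ball disjoint from $\gamma\cup L_2$, splitting the link. Every admissible $D_1$ is therefore punctured by $L_2$. When $L_1$ stays at distance at least $3\delta$ from $\gamma$ one can realise $D_1$ of diameter $O(\delta)$ inside a small ball around $L_1$, forcing $d(L_1,L_2)=O(\delta)$, and choosing $\delta$ small enough that this bound is strictly below $\varepsilon$ yields the required contradiction. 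The main obstacle is the delicate subcase where the embedded surface has folded so that $L_1$ lies close to $\gamma$; here I would combine $B(2\varepsilon)$ with the sphere-intersection constraint to localise $L_1$ near the equatorial region of $\Sigma$, run a mirror argument on $L_2$ to pin it into the same $O(\varepsilon)$-neighborhood of that region, and close the argument after sharpening the constants so that the resulting bound is strictly less than $\varepsilon$.
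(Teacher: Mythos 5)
Your overall engine is the same as the paper's: the Brunnian property of the Borromean rings (so $L_1\cup\gamma$ is an unlink and $L_1$ bounds a disk missing $\gamma$) combined with non-splittability (no such disk can also miss $L_2$), applied at the first time $A(\varepsilon)$ could fail. But your case division is inverted relative to the geometry, and the case that carries all the weight is the one you leave unfinished. Nothing in $B(2\varepsilon)\wedge C$ keeps the loops away from the bend line $b$: the sphere-intersection constraint only confines each attachment point to a lens around the midpoint of $b$, and in the configuration being protected the two loops are interlocked around the center of the roll, essentially on top of $b$. So your carefully argued main case --- $L_1$ at distance at least $3\delta$ from $\gamma$, so that a small ball around $L_1$ misses $\gamma$ entirely --- cannot be assumed and is in fact the inoperative one. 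The ``delicate subcase where $L_1$ lies close to $\gamma$'' is the whole lemma, and the plan you sketch for it (localize both loops near the equatorial region of $\Sigma$) cannot close the argument: that region has diameter comparable to $1$, so confining both loops to an $O(\varepsilon)$-neighborhood of it gives no bound on $d(L_1,L_2)$.

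The missing idea, which is how the paper resolves exactly this point, is that the ball around $L_1$ need not avoid $\gamma$; it suffices that $\gamma$ meets the ball in a single \emph{straight} chord. This holds here because only the straight segment $b$ comes near the loops, while the completing curve of property $C$ lies on the sphere of diameter $b$ and can be kept away from two $\varepsilon$-balls pinned near the middle of $b$. Since $L_1$ is pairwise unlinked from $\gamma$, it can be isotoped within the ball, avoiding the straight chord, to a small circle and then capped by a disk inside the ball that misses both $\gamma$ and $L_2$ (the latter lying in a disjoint ball if $d(L_1,L_2)>\varepsilon$); this splits the Borromean link, a contradiction. Two smaller issues: your justification that $L_1$ bounds a disk in the complement of $\gamma$ via $\mathrm{lk}(L_1,\gamma)=0$ is insufficient in general (cf.\ the Whitehead link) --- you need the Brunnian/unlink property --- and realizing the spanning disk inside a small ball also requires noting that $L_1$ remains unknotted throughout the motion. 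Without the straight-chord step, the proof does not go through.
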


\begin{figure}[t]
\centering\includegraphics[width=0.8\columnwidth]{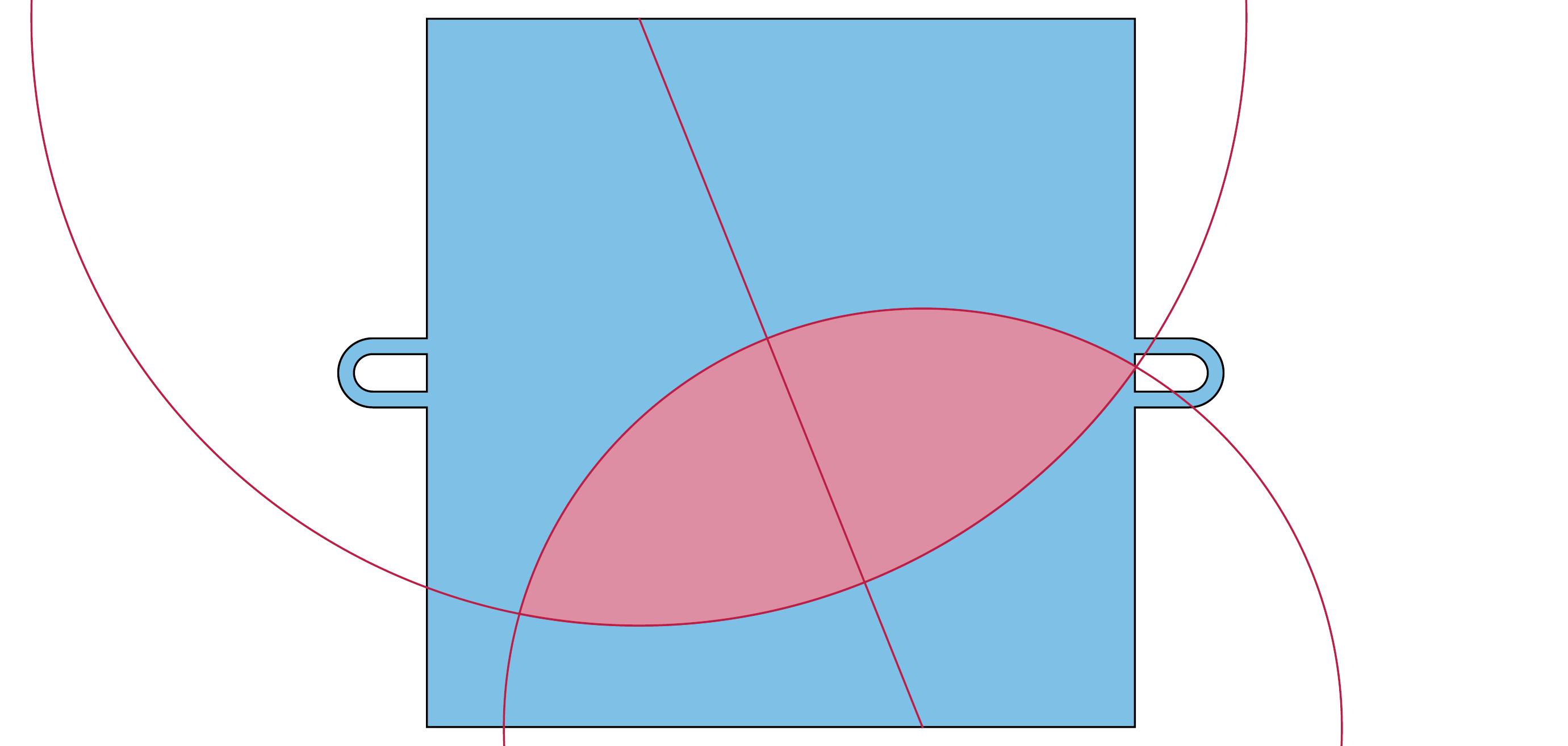}
\caption{For a bend line through the square's center, the attachment point of the right loop must be in the shaded intersection of circles centered at the bend endpoints, to avoid having greater distance to those endpoints than in the flattened surface.}
\label{fig:center-bend}
\end{figure}

\begin{proof}
By assumption $C$, the two loops of the surface and a third loop formed by bend line $b$ form Borromean rings, and by assumption $B$, line $b$ is near vertical and near the square's center, forcing the attachment points of the two loops to be far from the endpoints of~$b$.

Each loop of the surface has small diameter, so if the two loops could be far from each other it would be possible to make two small spheres (of radius $\varepsilon$), one containing each loop. The loop containing $b$ lies on a straight line within each of these two spheres. Although Borromean rings can have two loops in two disjoint spheres~\cite{Thi-PJM-91} (\cref{fig:borromean-spheres}), the third Borromean ring cannot pass through either sphere as a straight line.
If it could, we could deform a loop within its sphere to a circle (as it is not pairwise linked with the line passing through the sphere) and span it by a disk not crossed by the other two loops, impossible for the Borromean rings. This contradiction shows that the two loops cannot be separated by a distance larger than $\varepsilon$, as stated in property~$A$.
\end{proof}

These properties together prove that the surface of \cref{fig:tied-roll} is locked: for versions of this surface with short enough loops, it is impossible to deform it as a smoothly embedded surface to its flattened state. The same is true for the same reasons for approximations to this surface by polyhedral embeddings or flat foldings without interior vertices. As a result, we have the following theorem:

\begin{figure}[t]
\centering\includegraphics[width=0.7\columnwidth]{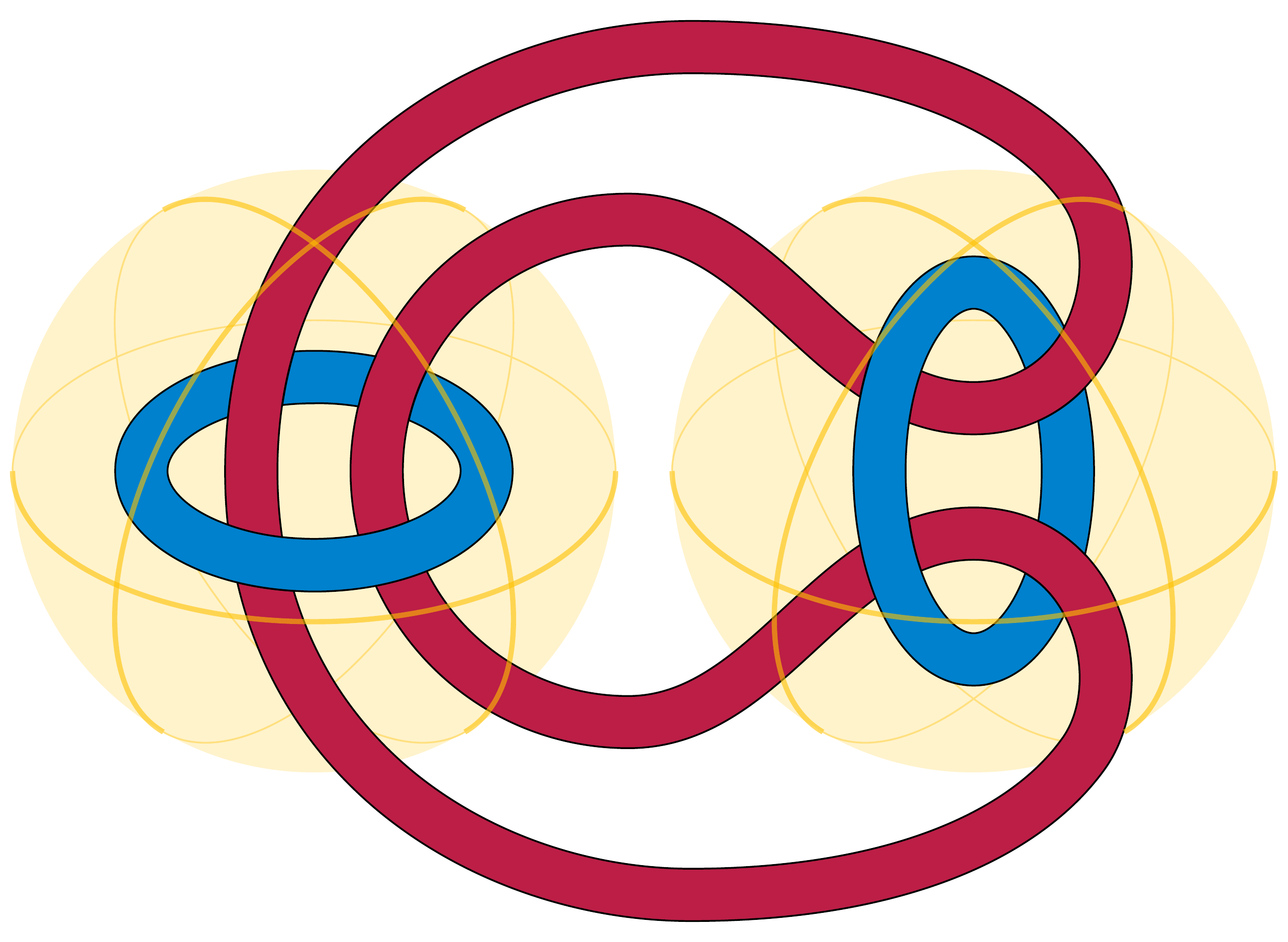}
\caption{Borromean rings with links in separate spheres}
\label{fig:borromean-spheres}
\end{figure}

\begin{theorem}
For smooth embeddings, polyhedral embeddings without interior vertices, and flat foldings without interior vertices, there exist flat surfaces with the topology of a disk with two holes that are ambient isotopic to their flattened form but cannot reach that form by a continuous sequence of folded states staying within the same class of folded states.
\end{theorem}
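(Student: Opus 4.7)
The plan is to combine the three Borromean implications \cref{lem:AB2C}, \cref{lem:AC2B}, and \cref{lem:BC2A} to rule out any continuous unfolding of the surface in \cref{fig:tied-roll}. The starting embedding satisfies all three properties $A(\varepsilon)$, $B(\varepsilon)$, $C$ for suitable small $\varepsilon$ and sufficiently small loop length $\delta$, while any flat embedding of the same combinatorial surface violates all three: the two loops end up at distance close to~$1$, no bend line exists at all, and there is no central loop available to link the other two. A continuous motion between these configurations would therefore have to witness a first moment at which one of the properties fails, and that moment is exactly what the three lemmas forbid.

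I would then argue by contradiction. Fix $\delta$ and $\varepsilon$ small enough that all three lemmas apply simultaneously, and suppose we have a continuous motion $\{F_t\}_{t\in[0,1]}$ through smooth embeddings from the configuration of \cref{fig:tied-roll} at $t=0$ to a flat embedding at $t=1$. Let $t^*$ be the infimum of $t$ at which at least one of the looser conditions $A(2\varepsilon)$, $B(2\varepsilon)$, $C$ fails. Continuity of the embedding family shows that the stricter versions $A(\varepsilon)$, $B(\varepsilon)$, and $C$ all hold on $[0,t^*)$, while at $t^*$ some one of the looser conditions must be about to break. Whichever of $A$, $B$, $C$ breaks first gives the contradiction: if $C$ breaks at $t^*$ then $A(\varepsilon)$ and $B(\varepsilon)$ held throughout $[0,t^*)$, contradicting \cref{lem:AB2C}; and analogous contradictions arise for $A$ and $B$ via \cref{lem:BC2A} and \cref{lem:AC2B}, whose slightly weaker hypotheses $A(2\varepsilon)$ and $B(2\varepsilon)$ are precisely what absorb the continuity-induced slack between $[0,t^*)$ and $t^*$.

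To upgrade the theorem to polyhedral embeddings without interior vertices and to flat foldings without interior vertices, I would observe that the three properties and their lemmas only refer to the distance-preserving nature of the map and to the existence of straight bend (or fold) lines through every non-flat interior point, features that all three models share. A polyhedral realization of \cref{fig:tied-roll} with a finely rolled centre and short loops satisfies $A(\varepsilon)\wedge B(\varepsilon)\wedge C$ verbatim, and the Borromean obstruction argument transfers without change. For flat foldings, property $C$ is interpreted via the consistent above-below stacking order, which gives the flat layers arbitrarily small vertical separations realizing the Borromean link in $\mathbb{R}^3$; the same three lemmas then apply, with fold lines in place of bend lines.

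The step I expect to be the main obstacle is making the first-failure argument of the middle paragraph fully rigorous. I need to check that each of $A$, $B$, $C$ is a closed condition on the embedding in the sup-norm topology (so that a minimum failure time exists), that the bend line appearing in properties $B$ and $C$ varies upper-semicontinuously enough for ``the bend line nearest the centre'' to be well defined throughout a neighbourhood of $t^*$, and that the $\varepsilon$-to-$2\varepsilon$ slack in \cref{lem:AC2B} and \cref{lem:BC2A} is genuinely sufficient to cover the continuity perturbation between the interior of $[0,t^*)$ and the boundary point $t^*$ used when invoking each lemma.
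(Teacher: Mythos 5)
Your proposal follows essentially the same route as the paper: verify $A(\varepsilon)$, $B(\varepsilon)$, $C$ for the initial configuration of \cref{fig:tied-roll}, note that the flattened state violates them, and use the three Borromean lemmas with the $\varepsilon$-to-$2\varepsilon$ slack to show no property can be the first to fail along a continuous motion, with the polyhedral and flat-folded cases handled by the same obstruction. Your added attention to the closedness and first-failure technicalities is a reasonable elaboration of details the paper leaves implicit, but the argument is the same.
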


\begin{proof}
Choose a sufficiently small $\delta>0$ and $\varepsilon>0$ for the lemmas above. The surface of \cref{fig:tied-roll}, in the configuration of the figure, has properties $A(\varepsilon)$, $B(\varepsilon)$, and $C$. As it continuously moves, all three properties remain true; none can be the first to fail, because at the instant it failed, the weaker properties $A(2\varepsilon)$ and $B(2\varepsilon)$ would still be true, which by the lemmas would imply all three of $C$, $A(\varepsilon)$, and $B(\varepsilon)$. However, the flattened configuration does not obey the three properties, so it cannot be reached.
\end{proof}

\section{Conclusions and open problems}

We have shown that, for smooth embeddings, polyhedral embeddings without interior vertices, and flat foldings without interior vertices, every spiral-shaped domain can be flattened. However, there exist more complex planar shapes whose configuration spaces are disconnected: they have locked embeddings that, although topologically equivalent, cannot be flattened.

Is every topological disk flattenable? Can the method that we used to flatten \cref{fig:knotted-dumbbell} be generalized to other disks? What about surfaces with a single hole? Additionally, we have only investigated the existence of flattenings, but not their algorithmic complexity. For polyhedral embeddings and flat foldings, how hard is it to determine whether a continuous flattening exists? What about for smooth embeddings represented as a piecewise cylindrical and conical surface?

\bibliographystyle{plainurl}
\bibliography{developable}

\end{document}